\title{Optimal prevention with possibilistic and mixed background risk}
\author{Irina Georgescu \\ \footnotesize Academy of Economic Studies\\ \footnotesize Department of Economic Informatics and Cybernetics\\ \footnotesize Pia$\c{t}$a Romana No 6  R 70167, Oficiul Postal 22, Bucharest, Romania\\
 \footnotesize Email: irina.georgescu@csie.ase.ro \\ and \\ Ana Mar\'{i}a Lucia Casademunt\\ \footnotesize Universidad Loyola--Andaluc\'{i}a\\ \footnotesize
Department of Business Administration,
Cordoba, Spain\\
\footnotesize Email: alucia@uloyola.es}
\date{}
\begin{document}
\maketitle

\begin{abstract}
In this paper the effect of posibilistic or mixed background risk on the level of optimal prevention is studied. In the framework of five purely possibilistic or mixed models, necessary and sufficient conditions are found such that the level of optimal saving decreases or increases as a result of the actions
of various types of background risk. This way our results complete those obtained by Courbage and Rey for some prevention models with probabilistic background risk.

\end{abstract}

\textbf{Keywords}: possibilistic background risk, optimal prevention, optimal saving
modal operator

\newtheorem{definitie}{Definition}[section]
\newtheorem{propozitie}[definitie]{Proposition}
\newtheorem{remarca}[definitie]{Remark}
\newtheorem{exemplu}[definitie]{Example}
\newtheorem{intrebare}[definitie]{Open question}
\newtheorem{lema}[definitie]{Lemma}
\newtheorem{teorema}[definitie]{Theorem}
\newtheorem{corolar}[definitie]{Corollary}

\newenvironment{proof}{\noindent\textbf{Proof.}}{\hfill\rule{2mm}{2mm}\vspace*{5mm}}

\section{Introduction}

The prevention is the effort made by an agent to reduce the action of an unwanted event (e. g. a financial loss). The effort can be the amount of money invested by the agent in prevention activities and then he is interested to determine the optimal level of this effort.

According to \cite{courbage2}, the prevention consists in reducing the probability of a loss (self--protection), the size of loss (self--insurance) or both of them (self--insurance--cum--protection). For simplicity, in this paper by prevention one understands self--protection (this terminology is used in \cite{courbage1}).

The first contribution to the study of optimal prevention is the paper \cite{ehrlich} by Elrich and Becker which investigates the interconnection between the market insurance, self-insurance and self--protection. Then, the subject generated a real literature focused especially on the way consumers' preferences influence the optimal level of prevention (see \cite{courbage2}).

A common topic deals with the relationship between risk aversion and optimal prevention. For instance, Dionne and Eeckhoudt emphasize in \cite{dionne}   the fact that the increase of risk aversion does not lead to an increase in optimal investment in prevention, its effect being ambiguous.
A second topic is the relationship between the notion of prudence introduced by Kimball in \cite{kimball} and that of optimal prevention (see \cite{eeckhoudt2}, \cite{courbage2}).
Paper \cite{eeckhoudt2} studies the impact of prudence on optimal prevention in a one--period model, while \cite{courbage2} analyzes the obtaining of the optimal prevention in a
two-period framework. In the latter case, the prevention effort is done in the first period, and its effect on the likelihood of appearance of loss occurs during the second period.

Prevention models with background risk appear for the first time in the paper by Courbage and Ray \cite{courbage1} and Eeckhoudt et al. \cite{eeckhoudt3}. Background risk may appear in one or both periods. With respect to this, the changes in optimal prevention level occur in relation to a benchmark model (in which the background risk does not exist).

In the mentioned papers we deal with probabilistic models, in which the background risk is represented by a random variable. This paper aims to study optimal prevention in non--probabilistic models, in which we have a possibilistic or mixed background risk (see \cite{casademunt}, \cite{georgescu1}, \cite{georgescu2}). In the former case, the two--period background risk will be a fuzzy number, and in the latter case, in one period, background risk will be a random variable, and in the other period, it will be a fuzzy number.

Section $2$ contains the definition of possibilistic expected utility and a few of its properties which will be used in this paper.

In Section $3$ the benchmark prevention model is presented following \cite{courbage1}, \cite{courbage2}. This will be the starting point in the background risk model construction from the next sections.

In Section $4$ eight background risk prevention models are described. These are two--period constructions, and the risk from the two periods may be either a random variable or a fuzzy number. The first three models are probabilistic, and their theory has been already developed in \cite{courbage1}. The other five models are purely possibilistic or mixed. Their construction follows a similar line to the first three: the total utility functions are defined, then first order conditions verified by optimal prevention are deduced.

Section $5$ focuses on a model in which various types of background risk of the five models influence the level of optimal prevention. For each of the five models, the level of optimal prevention is compared to that of the benchmark model. Necessary and sufficient conditions for the increase or decrease in optimal prevention with respect to the existing level in case of the benchmark model are proved.

In Section $6$ the background risk effect on optimal saving is studied when to the already existing risk in one period a (possibilistic or probabilistic) risk is added in the other period.

\section{Possibilistic expected utility}

In this section we recall the definition of possibilistic expected utility and a few of its basic properties (cf. \cite{carlsson1}, \cite{carlsson2}, \cite{dubois1}, \cite{dubois2}, \cite{fuller}).

If $X$ is a random variable defined on a probability field $(\Omega, \mathcal{K}, P)$ with $\Omega \subseteq {\bf{R}}$ then we will denote by $M(X)$ its expected value. For any continuous utility function $u: {\bf{R}} \rightarrow {\bf{R}}$, the expected value $M(u(X))$ of the random variable $u(X)$ is the probabilistic expected utility associated with $X$ and $u$.

To develop a possibilistic risk theory we need a similar notion of possibilistic expected utility. Its definition assumes a framework composed of the following components:

$\bullet$ a utility function $u: {\bf{R}} \rightarrow {\bf{R}}$ representing an agent;

$\bullet$ a fuzzy number $A$ modelling the possibilistic risk;

$\bullet$ a weighting function, i.e. a non--negative and monotone increasing function $f: [0, 1] \rightarrow {\bf{R}}$ satisfying the normality condition
$\int_0^1 f(\gamma)d\gamma=1$.

Assume that the level sets of the fuzzy number $A$ have the form $[A]^\gamma=[a_1(\gamma), a_2(\gamma)]$, for any $\gamma \in [0, 1]$. Then the possibilistic expected utility $E_f(u(A))$ is defined by

(2.1) $E_f(u(A))=\frac{1}{2} \int_0^1 [u(a_1(\gamma))+u(a_2(\gamma))]f(\gamma)d\gamma$

If $u$ is the identity function of $\bf{R}$ then $E_f(u(A))$ is exactly the possibilistic expected value \cite{carlsson1}, \cite{fuller}.

(2.2) $E_f(A)=\frac{1}{2} \int_0^1 [a_1(\gamma)+a_2(\gamma)]f(\gamma)d\gamma$.

The following two propositions emphasize two important properties of possibilistic expected utility: the first one is a linearity property, the second one is a Jensen type inequality (used in the following in the analysis of optimal prevention variations with respect to various types of background risk).

\begin{propozitie}\cite{georgescu1}
Let $g: {\bf{R}} \rightarrow {\bf{R}}$, $h: {\bf{R}} \rightarrow {\bf{R}}$ two continuous utility functions and $a, b \in {\bf{R}}$. If $u=ag+bh$, then $E_f(u(A))=aE_f(g(A))+bE_f(h(A))$.
\end{propozitie}

\begin{propozitie}\cite{georgescu1}, \cite{casademunt}
If the utility function $u$ is convex, then $u(E_f(A)) \leq E_f(u(A))$.
\end{propozitie}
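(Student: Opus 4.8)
The plan is to use the geometric characterization of convexity: a convex function lies above each of its supporting lines. Fix the real number $m = E_f(A)$ and let $\lambda$ be a subgradient of $u$ at $m$ (which equals $u'(m)$ whenever $u$ is differentiable). Then the affine function $\ell(x) = u(m) + \lambda(x - m)$ satisfies $u(x) \geq \ell(x)$ for every $x \in \mathbf{R}$. The whole argument is then a tangent-line trick: integrate this pointwise inequality against the weighting function and exploit the fact that $E_f$ of an affine map reproduces that affine map evaluated at $E_f(A)$.

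First I would instantiate the supporting-line inequality at the endpoints of each level set, writing $u(a_1(\gamma)) \geq \ell(a_1(\gamma))$ and $u(a_2(\gamma)) \geq \ell(a_2(\gamma))$ for every $\gamma \in [0,1]$. Adding these, multiplying by $\tfrac{1}{2} f(\gamma)$ — legitimate since $f$ is non-negative, so the inequality is preserved — and integrating over $[0,1]$ yields $E_f(u(A)) \geq E_f(\ell(A))$ directly from the defining formula (2.1). This step is nothing more than the monotonicity of $E_f$ with respect to the utility argument, a consequence of $f \geq 0$.

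The decisive step is evaluating $E_f(\ell(A))$, and here I would invoke the linearity property of the preceding proposition. Writing $\ell$ as the combination $[u(m) - \lambda m]\cdot 1 + \lambda \cdot \mathrm{id}$ of the constant function $1$ and the identity, linearity gives $E_f(\ell(A)) = [u(m) - \lambda m]\,E_f(1) + \lambda\, E_f(\mathrm{id}(A))$. Now $E_f(\mathrm{id}(A)) = E_f(A) = m$ by (2.2), while $E_f(1) = \int_0^1 f(\gamma)\,d\gamma = 1$ by the normality condition. Hence $E_f(\ell(A)) = u(m) - \lambda m + \lambda m = u(m) = u(E_f(A))$, and combining with the previous step gives $u(E_f(A)) \leq E_f(u(A))$, as required.

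The one point requiring care — and the main obstacle if one insists on a fully rigorous argument for merely convex, not necessarily differentiable, $u$ — is the existence of the supporting slope $\lambda$. This is handled by recalling that a convex function on $\mathbf{R}$ admits finite one-sided derivatives at every point and that any $\lambda \in [u'_-(m), u'_+(m)]$ serves as a subgradient, guaranteeing the line $\ell$. Everything else is bookkeeping; I would emphasize that the exact cancellation of the first-order term $\lambda(E_f(A) - m)$ rests precisely on the normality of $f$ together with the definition (2.2) of $E_f(A)$, which is what makes the tangent-line estimate close.
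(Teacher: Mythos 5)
Your proof is correct. A point of order first: the paper does not actually prove this proposition---it is imported from \cite{georgescu1} and \cite{casademunt}---so there is no in-paper argument to compare against; the comparison below is with the natural alternative route.

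Your tangent-line argument is sound and self-contained, and all three ingredients check out: (a) a finite convex function on ${\bf{R}}$ admits a subgradient $\lambda$ at $m=E_f(A)$ (one-sided derivatives exist everywhere), giving $u(x)\geq \ell(x)=u(m)+\lambda(x-m)$ for all $x$; (b) instantiating at $a_1(\gamma)$ and $a_2(\gamma)$, multiplying by $\tfrac{1}{2}f(\gamma)\geq 0$ and integrating yields $E_f(u(A))\geq E_f(\ell(A))$ directly from (2.1); (c) $E_f(\ell(A))=u(m)$ follows from Proposition 2.1 applied to the decomposition $\ell=[u(m)-\lambda m]\cdot 1+\lambda\cdot\mathrm{id}$, together with the normality condition $\int_0^1 f(\gamma)d\gamma=1$ and (2.2). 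An alternative that reaches the conclusion faster is to note that $f(\gamma)\,d\gamma$ is a probability measure on $[0,1]$ and to reduce to the classical integral Jensen inequality plus pointwise (midpoint) convexity:
\[
u(E_f(A))=u\Bigl(\int_0^1 \tfrac{a_1(\gamma)+a_2(\gamma)}{2}f(\gamma)\,d\gamma\Bigr)\leq \int_0^1 u\Bigl(\tfrac{a_1(\gamma)+a_2(\gamma)}{2}\Bigr)f(\gamma)\,d\gamma\leq \int_0^1 \tfrac{u(a_1(\gamma))+u(a_2(\gamma))}{2}f(\gamma)\,d\gamma=E_f(u(A)).
\]
Your version buys independence from the probabilistic result: everything is derived from the definition (2.1) and the stated properties of $f$, which is in the spirit of building a possibilistic theory parallel to, rather than on top of, the probabilistic one. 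The reduction buys brevity at the cost of invoking Jensen as a black box. Either is a legitimate proof of the statement.
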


\section{The benchmark prevention model}

In this section we present, according to \cite{courbage1}, \cite{courbage2}, the entities which compose the benchmark prevention model, the form of its total utility function and the first--order condition with which the level of optimal prevention is selected.

The benchmark prevention model \cite{courbage1}, \cite{courbage2} has two periods $1$ and $2$: the prevention effort is conducted in period $1$, and the effect of this effort on the probability of the occurrence of a loss in wealth) is manifested in period $2$.

The model is defined by the following entities:

$\bullet$ $w_1, w_2$ are the safe wealth in periods $1$ and $2$;

$\bullet$ $u$ and $v$ are utility functions for period $1$ and $2$, respectively;

$\bullet$ $e$ is the level of prevention effort (in period $1$);

$\bullet$ $l$ is the loss level;

$\bullet$ $p(e)$ is the probability that a loss $l$ occurs at the prevention level $e$.

In \cite{courbage1}, \cite{courbage2} this model is studied under the hypotheses:

$\bullet$ the utility functions $u$ and $v$ are of class $\mathcal{C}^2$, $u'>0$, $v'>0$, $u''<0$, $v''<0$;

$\bullet$ $p'(e)<0$, $p''(e)>0$, for any level $e$;

$\bullet$ $l>0$.

According to these initial data, in the second period of the consumption activity we will have an endogenous risk $\tilde \epsilon_e$ (parameterized by the prevention level $e$), whose outcomes are $-l$ and $0$:

\[
\tilde \epsilon_e=
  \begin{pmatrix}
    -l & 0  \\
    p(e)& 1-p(e)
  \end{pmatrix}
\]

Then the expected utility corresponding to the second period will be $M(w_2+\tilde \epsilon_e)=p(e)v(w_2-l)+(1-p(e))v(w_2)$, and the total utility function will be:

(3.1) $V(e)=u(w_1-e)+p(e)v(w_2-l)+(1-p(e))v(w_2)$.

The agents wish to choose a level $e$ of the effort maximizing the total utility $V(e)$. This comes to finding the solution
$e^\ast$ of the optimization problem:

$\displaystyle \max_{e} V(e)$.

Since $V(e)$ is strictly concave, the optimal solution $e^\ast$ is given by $V'(e^\ast)=0$, which leads us to the first--order condition:

(3.2) $u'(w_1-e^\ast)=p'(e^\ast)[v(w_2-l)-v(w_2)]$.

\section{Prevention models with background risk}

The prevention models from this section are built starting from the benchmark model from Section $3$. Their definition is founded on the following assumptions:

{\emph{Assumption 1}} Background risk may appear both in period $1$ and in period $2$.

{\emph{Assumption 2}} The background risk from each period can be modeled both probabilistically (by a random variable), and possibilistically (by a fuzzy number).

Denoting by $0$ the lack of background risk in one of the two periods, the eight models which can be defined on the basis of the two assumptions are specified in Table $1$:

\begin{table}[ht]
\centering
\begin{tabular}{c c c c }
\hline \hline
   & Period 1 & Period 2  \\[0.5ex]
\hline
1 & probabilistic background risk  & 0 \\
2 & 0  & probabilistic background risk  \\
3 & probabilistic background risk  & probabilistic background risk \\
4 & possibilistic background risk  & 0  \\
5 & 0 & possibilistic background risk  \\
6 & possibilistic background risk  & possibilistic background risk   \\
7 & probabilistic background risk & possibilistic background risk   \\
8 & possibilistic background risk  & probabilistic background risk  \\[1ex]
\hline
\end{tabular}
\label{table: nonlin}
\end{table}

In cases $1-3$ we deal with probabilistic models, in cases $4-6$ with possibilistic models, and in the last two cases with mixed models.

Models $1-3$ have been studied in \cite{courbage1}. Next we present briefly the construction of these models.

{\emph{Model 1}} \cite{courbage1}
In period $1$ appears a probabilistic risk modeled by the random variable $X$. The total utility function  has the following form:

$V_1(e)=M(u(w_1-e+X))+p(e)v(w_2-l)+(1-p(e))v(w_2)$.

The first derivative of $V_1(e)$ will be:

$V'_1(e)=-M(u'(w_1-e+X))+p'(e)[v(w_2-l)-v(w_2)]$

and the optimal value $e_1^\ast$ will be given by the first--order condition:

(4.1) $M(u'(w_1-e_1^\ast+X))=p'(e_1^\ast)[v(w_2-l)-v(w_2)]$.

{\emph{Model 2}} \cite{courbage1}
In period $2$ appears a probabilistic risk modeled by the random variable $Y$. The overall utility function will be

$V_2(e)=u(w_1-e)+p(e)M(v(w_2-l+Y))+(1-p(e))M(v(w_2+Y))$

The first derivative of $V_2(e)$ will be:

$V'_2(e)=-u'(w_1-e)+p'(e)[M(v(w_2-l+Y))-M(v(w_2+Y))]$.

and the optimal value $e_2^\ast$ of $e$ will be given by the condition:

(4.2) $u'(w_1-e_2^\ast)=p'(e_2^\ast)[M(v(w_2-l+Y))-M(v(w_2+Y))]$.

{\emph{Model 3}} \cite{courbage1}
Background risk appears in both periods and is modeled by the random variables $X$, and $Y$, respectively. The total utility function has the form:

$V_3(e)=M(u(w_1-e+X))+p(e)[M(v(w_2-l+Y))-M(v(w_2+Y))]$.

Then the first derivative of $V_3(e)$ will be:

$V'_3(e)=-M(u'(w_1-e+X))+p'(e)[M(v(w_2-l+Y))-M(v(w_2+Y))]$

and the optimal value $e_3^\ast$ of $e$ will be given by the condition:

(4.3) $M(u'(w_1-e_3^\ast+X))=p'(e_3^\ast)[M(v(w_2-l+Y))-M(v(w_2+Y))]$

Next we will present the construction of models $4-8$ from Table 1. For the rest of the section we fix a weighting function $f: [0, 1] \rightarrow {\bf{R}}$.

{\emph{Model 4}} Assume that the background risk is described by a fuzzy number $A$ whose level sets are $[A]^\gamma=[a_1(\gamma), a_2(\gamma)]$ for any $\gamma \in [0, 1]$. The total utility function of the model will be:

$V_4(e)=E_f(u(w_1-e+A))+p(e)v(w_2-l)+(1-p(e))v(w_2)$

According to the definition (2.1) of possibilistic expected utility, $V_4(e)$ can be written:

$V_4(e)=\frac{1}{2} \int_0^1 [u(w_1-e+a_1(\gamma))+u(w_1-e+a_2(\gamma))]f(\gamma)d\gamma+p(e)v(w_2-l)+(1-p(e))v(w_2)$.

By derivation, one obtains

$V'_4(e)=-\frac{1}{2} \int_0^1 [u'(w_1-e+a_1(\gamma))+u'(w_1-e+a_2(\gamma))]f(\gamma)d\gamma+p'(e)[v(w_2-l)-v(w_2)]$,

from where, by (2.1), the following form of the derivative of $V_4(e)$ follows:

(4.4) $V'_4(e)=-E_f(u'(w_1-e+A))+p'(e)[v(w_2-l)-v(w_2)]$.

By deriving one more time, one reaches

$V''_4(e)=E_f(u''(w_1-e+A))+p''(e)[v(w_2-l)-v(w_2)]$.

From the hypotheses imposed on the utility functions $u$ and $v$ on the probability $p(e)$ it follows $V''_4(e)<0$ for any level $e$ of the prevention effort, thus $V_4$ is a concave function. Then the value  $e_4^\ast$ of $e$ maximizing the total utility function $V_4$ verifies the first--order condition $V'_4(e_4^\ast)=0$, which, by (4.4), is written as:

(4.5) $E_f(u'(w_1-e_4^\ast+A))=p'(e_4^\ast)[v(w_2-l)-v(w_2)]$.

{\emph{Model 5}} Assume that the background risk in period $2$ is a fuzzy number $B$ whose level sets are $[B]^\gamma=[b_1(\gamma), b_2(\gamma)]$. The total utility function of the model will be:

$V_5(e)=u(w_1-e)+p(e)E_f(v(w_2-l+B))+(1-p(e))E_f(v(w_2+B))$

$=u(w_1-e)+\frac{1}{2} p(e)\int_0^1 [v(w_2-l+b_1(\gamma))+v(w_2-l+b_2(\gamma))]f(\gamma)d\gamma+\frac{1}{2} (1-p(e))\int_0^1[v(w_2+b_1(\gamma))+v_2(w_2+b_2(\gamma))]f(\gamma)d\gamma$.

By derivation one obtains:

(4.6) $V'_5(e)=-u'(w_1-e)+p'(e)[E_f(v(w_2-l+B))-E_f(v(w_2+B))]$

$V''_5(e)=u''(w_1-e)+p''(e)[E_f(v(w_2-l+B))-E_f(v(w_2+B))]$

Due to the monotonicity property of the possibilistic expected utility the inequality $E_f(v(w_2-l+B))-E_f(v(w_2+B))\leq 0$, therefore $V''_5(e)<0$ for any $e$, hence $V_5$ is strictly concave. Then the value $e_5^\ast$ of $e$ maximizing $V_5(e)$ verifies the condition $V'_5(e_5^\ast)=0$, which, by (4.6), can be written:

(4.7) $u'(w_1-e^\ast_5)=p'_5(e_5^\ast)[E_f(v(w_2-l+B))-E_f(v(w_2+B))]$.

{\emph{Model 6}} Assume that the background risk in period $1$ is the fuzzy number $A$, and the background risk from period $2$ is the fuzzy number $B$.
The total utility function of the model will be:

$V_6(e)=E_f(u(w_1-e+A))+p(e)E_f(v(w_2-e+B))+(1-p(e))E_f(v(w_2+B))$.

By derivation one obtains:

(4.8) $V'_6(e)=-E_f(u'(w_1-e+A))+p'(e)[E_f(v(w_2-l+B))-E_f(v(w_2+B))]$

$V''_6(e)=E_f(u''(w_1-e+A))+p''(e)[E_f(v(w_2-l+B))-E_f(v(w_2+B))]$

From the last inequality $V''_6(e)\leq 0$ follows for any $e$, thus $V_6$ is strictly concave. Then the value $e_6^\ast$ of $e$ maximizing $V_6(e)$ verifies the condition $V'_6(e_6^\ast)=0$, which, by (4.8), can be written:

(4.9) $E_f(u'(w_1-e_6^\ast+A))=p'(e)[E_f(v(w_2-l+B))-E_f(v(w_2+B))]$

{\emph{Model 7}} Assume that the background risk in period $1$ is the random variable $X$, and the background risk in period $2$ is the fuzzy number $B$. The total utility function of the model will be:

$V_7(e)=M(u(w_1-e+X))+p(e)E_f(v(w_2-l+B))+(1-p(e))E_f(v(w_2+B))$

By derivation one obtains

(4.10) $V'_7(e)=-M(u'(w_1-e+X))+p'(e)[E_f(v(w_2-l+B))-E_f(v(w_2+B))]$

It follows $V'(e)<0$ for any level $e$ of the effort level, thus the value $e_7^\ast$ of $e$ maximizing $V_7(e)$ is given by $V'_7(e_7^\ast)=0$. By (4.10), this first--order condition is written as:

(4.11) $M(u'(w_1-e_7^\ast+X))=p'(e_7^\ast)[E_f(v(w_2-l+B))-E_f(v(w_2+B))]$.

{\emph{Model 8}} Assume that the background risk in period $1$ is the fuzzy number $A$, and the background risk in period $2$ is a random variable $Y$. The total utility function of the model will be:

$V_8(e)=E_f(u(w_1-e+A))+p(e)M(v(w_2-l+Y))+(1-p(e))M(v(w_2+Y))$

By derivation one obtains

(4.12) $V'_8(e)=-E_f(u'(w_1-e+A))+p'(e)[M(v(w_2-l+Y))-M(v(w_2+Y))]$

One proves immediately that $V_8$ is concave, thus the value $e_8^\ast$ of $e$ maximizing $V_8(e)$ verifies $V'_8(e_8^\ast)=0$, which, by (4.12), can be written:

(4.13) $E_f(u'(w_1-e_8^\ast+A))=p'(e_8^\ast)[M(v(w_2-l+Y))-M(v(w_2+Y))]$

\begin{remarca}
The above models incorporate both the entities of the benchmark model, and the features of background risk types defining them. These appear in the expression of total utility functions and in the first--order condition which determines the level of optimal prevention.
\end{remarca}

\section{The effect of background risk on optimal prevention}

In this section we study the way adding various types of background risk may produce changes in the level of optimal saving. For this we have to compare
the optimal prevention level $s^\ast$ of the benchmark model with the optimal prevention levels $s_i^\ast, i=1,\ldots, 8$ of the eight models of the previous section. The models $1-3$ being already treated in \cite{courbage1}, we will handle the models $4-8$ in Table $1$. For each case, we will be situated in the framework set defined in Section $4$.

{\emph{Case 4}} (the comparison of $e^\ast$ and $e_4^\ast$)

The following result establishes a necessary and sufficient condition such that, adding a possibilistic background risk $A$ in period $1$, the optimal saving level drops.

\begin{propozitie}
The following assertions are equivalent:

(i) $e_4^\ast \leq e^\ast$;

(ii) $u'(w_1-e^\ast)\leq E_f(u'(w_1-e^\ast+A))$.

\end{propozitie}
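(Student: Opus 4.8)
The plan is to reduce the stated equivalence to a single sign comparison, exploiting the strict concavity of $V_4$ together with the benchmark first--order condition (3.2). The starting observation is that, by (4.4), the derivative of $V_4$ differs from the derivative of the benchmark total utility only in its first term, namely
\[
V_4'(e) = -E_f(u'(w_1-e+A)) + p'(e)[v(w_2-l)-v(w_2)],
\]
whereas the benchmark derivative is $V'(e) = -u'(w_1-e) + p'(e)[v(w_2-l)-v(w_2)]$.

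First I would evaluate $V_4'$ at the benchmark optimum $e^\ast$. Substituting the benchmark condition (3.2), that is $p'(e^\ast)[v(w_2-l)-v(w_2)] = u'(w_1-e^\ast)$, into the expression for $V_4'(e^\ast)$ collapses the prevention term and yields
\[
V_4'(e^\ast) = u'(w_1-e^\ast) - E_f(u'(w_1-e^\ast+A)).
\]
This already ties the sign of $V_4'(e^\ast)$ to assertion (ii): we have $V_4'(e^\ast) \leq 0$ if and only if $u'(w_1-e^\ast) \leq E_f(u'(w_1-e^\ast+A))$, which is exactly (ii).

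Next I would translate the ordering $e_4^\ast \leq e^\ast$ into a statement about the sign of $V_4'(e^\ast)$. Since the hypotheses on $u$, $v$ and $p$ force $V_4'' < 0$ (as already established in the construction of Model 4), the function $V_4'$ is strictly decreasing, and $V_4'(e_4^\ast)=0$ by the first--order condition (4.5). Strict monotonicity then gives the chain $e_4^\ast \leq e^\ast \iff V_4'(e^\ast) \leq V_4'(e_4^\ast) = 0$. Combining this with the sign computation of the previous step establishes (i) $\iff$ (ii).

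I do not expect a serious obstacle here; the argument is a clean comparative--statics computation resting on (3.2), (4.4) and (4.5). The only point requiring care is keeping the direction of the inequalities consistent: because $V_4'$ is decreasing, a larger optimum corresponds to a smaller derivative value at $e^\ast$, so one must check that $e_4^\ast \leq e^\ast$ pairs with $V_4'(e^\ast) \leq 0$ rather than with the reverse inequality. Everything else is direct substitution.
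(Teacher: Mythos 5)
Your proof is correct and follows essentially the same route as the paper's: both arguments combine the strict monotonicity of $V_4'$ (from $V_4''<0$), the first--order condition $V_4'(e_4^\ast)=0$, and substitution of the benchmark condition (3.2) into $V_4'(e^\ast)$ to reduce (i) to the sign comparison in (ii). Incidentally, your explicit appeal to $V_4''<0$ is cleaner than the paper's wording, which misstates the monotonicity hypothesis as ``$V_4'(e)<0$''.
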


\begin{proof}
From the previous section we know that $V'_4(e)<0$ for any $e$, thus $V'_4$ is a strictly decreasing function. Then, by (4.4) and (3.2), the following equivalences hold:

$e_4^\ast \leq e^\ast$ iff $V'_4(e^\ast) \leq V'_4(e_4^\ast)=0$;

\hspace{1.2cm} iff $-E_f(u'(w_1-e^\ast+A))+p'(e^\ast)[v(w_2-l)-v(w_2)]\leq 0$;

\hspace{1.2cm} iff $-E_f(u'(w_1-e^\ast+A))+u'(w_1-e^\ast)\leq 0$;

\hspace{1.2cm} iff $u'(w_1-e^\ast)\leq E_f(u'(w_1-e^\ast+A))$.
\end{proof}

\begin{corolar}
If $u'''>0$ and $E_f(A)=0$ then $e_4^\ast \leq e^\ast$.
\end{corolar}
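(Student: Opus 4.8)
The plan is to reduce the statement to condition (ii) of the preceding proposition and then derive that inequality from the Jensen-type inequality of Proposition 2.2. By the equivalence (i) $\Leftrightarrow$ (ii) already established, the conclusion $e_4^\ast \leq e^\ast$ holds precisely when
\[
u'(w_1-e^\ast)\leq E_f(u'(w_1-e^\ast+A)),
\]
so it suffices to prove this single inequality under the two hypotheses $u'''>0$ and $E_f(A)=0$.

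First I would introduce the auxiliary function $g:{\bf R}\rightarrow{\bf R}$ defined by $g(x)=u'(w_1-e^\ast+x)$. Differentiating twice gives $g''(x)=u'''(w_1-e^\ast+x)$, since the inner map $x\mapsto w_1-e^\ast+x$ has derivative $1$; by the hypothesis $u'''>0$ this is strictly positive, so $g$ is convex (and continuous, as $u\in\mathcal{C}^2$ with $u'''$ existing). This is the step that exploits the prudence-type assumption $u'''>0$, turning it into convexity of the translated marginal utility $g$.

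Next I would apply the Jensen-type inequality (Proposition 2.2) to the convex function $g$ and the fuzzy number $A$, obtaining $g(E_f(A))\leq E_f(g(A))$. Substituting the hypothesis $E_f(A)=0$ collapses the left-hand side to $g(0)=u'(w_1-e^\ast)$, while the right-hand side is exactly $E_f(g(A))=E_f(u'(w_1-e^\ast+A))$. Hence $u'(w_1-e^\ast)\leq E_f(u'(w_1-e^\ast+A))$, which is condition (ii), and the preceding proposition then yields $e_4^\ast\leq e^\ast$.

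The argument is a direct application, so I do not expect a serious obstacle; the only points deserving care are, first, the justification that $g$ (rather than $u$ itself) is the convex function to which Proposition 2.2 is applied, i.e. that convexity of $u'$ follows from $u'''>0$ and is preserved under the affine shift by $w_1-e^\ast$, and second, the minor bookkeeping that the level-set representation $[A]^\gamma=[a_1(\gamma),a_2(\gamma)]$ underlying the definition of $E_f$ is unaffected by evaluating $g$ at its endpoints.
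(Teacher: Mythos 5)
Your proof is correct and follows essentially the same route as the paper: reduce to condition (ii) of Proposition 5.1 and derive that inequality from the Jensen-type inequality of Proposition 2.2 using $u'''>0$ and $E_f(A)=0$. The only cosmetic difference is that you absorb the translation $w_1-e^\ast$ into the convex function $g(x)=u'(w_1-e^\ast+x)$ and evaluate at $E_f(A)=0$, whereas the paper applies Proposition 2.2 directly to the convex function $u'$ and the shifted fuzzy number $w_1-e^\ast+A$, using linearity of the possibilistic expected value to write $u'(w_1-e^\ast)=u'(E_f(w_1-e^\ast+A))$.
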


\begin{proof}
By applying Proposition 2.2 to the convex function $u'$ it follows that $u'(w_1-e)=u'(E_f(w_1-e+A))\leq E_f(u'(w_1-e+A))$, hence by Proposition 5.1, one obtains $e_4^\ast \leq e$.
\end{proof}

Corollary 5.2 shows that if the agent is prudent \footnote{The notion of prudence in a possibilistic or mixed framework was studied in \cite{casademunt}, \cite{dionne}}(in period $1$) then the presence of possibilistic background risk in the first period reduces the optimal prevention.

{\emph{Case 5}} (the comparison of $e^\ast$ and $e_5^\ast$)

The following result establishes a necessary and sufficient condition such that, adding a possibilistic background risk $B$ in period $2$, the optimal prevention level raises.

\begin{propozitie}
The following assertions are equivalent:

(i) $e_5^\ast \geq e^\ast$;

(ii) $v(w_2-l)-v(w_2)-E_f(v(w_2-l+B))+E_f(v(w_2+B))\geq 0$.
\end{propozitie}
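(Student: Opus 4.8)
The plan is to mirror the argument of Proposition 5.1, exploiting the strict concavity of $V_5$ established in Section 4 together with the two first-order conditions (3.2) and (4.7). Since $V_5$ is strictly concave, its derivative $V'_5$ is strictly decreasing, so the ordering of the two optimal levels is detected by the sign of $V'_5$ at the benchmark optimum. Concretely, I would first observe that $e_5^\ast \geq e^\ast$ holds if and only if $V'_5(e^\ast) \geq V'_5(e_5^\ast) = 0$, using that $V'_5$ is decreasing and that $e_5^\ast$ is characterized by $V'_5(e_5^\ast)=0$.

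The next step is to compute $V'_5(e^\ast)$ explicitly. Starting from the expression (4.6) for $V'_5$, I would substitute the benchmark first-order condition (3.2), which reads $u'(w_1-e^\ast) = p'(e^\ast)[v(w_2-l)-v(w_2)]$, in order to eliminate the term $u'(w_1-e^\ast)$. Collecting the resulting terms over the common factor $p'(e^\ast)$ yields $V'_5(e^\ast) = p'(e^\ast)\bigl[E_f(v(w_2-l+B)) - E_f(v(w_2+B)) - v(w_2-l) + v(w_2)\bigr]$, so that the entire period-$2$ comparison is expressed as a single bracketed difference multiplied by $p'(e^\ast)$.

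The final step, and the point requiring care, is the sign manipulation, which is exactly where this case departs from Case 4. Unlike Proposition 5.1 — where the substitution via (3.2) replaced a $p'$-term and no division was needed — here the same substitution \emph{introduces} the factor $p'(e^\ast)$, and by hypothesis $p'(e^\ast) < 0$. Dividing the inequality $V'_5(e^\ast) \geq 0$ by this strictly negative quantity reverses its direction, forcing the bracketed expression to be nonpositive; rearranging then gives precisely assertion (ii), namely $v(w_2-l)-v(w_2)-E_f(v(w_2-l+B))+E_f(v(w_2+B)) \geq 0$. The main obstacle is thus bookkeeping this sign reversal correctly, and it is exactly this reversal that converts the present statement into one about an \emph{increase} ($e_5^\ast \geq e^\ast$), in contrast with the \emph{decrease} obtained in Proposition 5.1 where the loss risk sat in period $1$.
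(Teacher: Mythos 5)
Your proof is correct and follows essentially the same route as the paper's: compare $V'_5(e^\ast)$ with $V'_5(e_5^\ast)=0$ using the strict monotonicity of $V'_5$, substitute the benchmark first--order condition (3.2), and then divide out the strictly negative factor $p'(e^\ast)$. Your sign bookkeeping is in fact the correct one --- the paper's own chain opens with the reversed inequality $V'_5(e^\ast)\leq 0$ (a typo, silently fixed two lines later), whereas your version $e_5^\ast \geq e^\ast \Leftrightarrow V'_5(e^\ast)\geq 0$ is what the argument genuinely requires.
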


\begin{proof}
The function $V'_5$ is strictly decreasing, thus by applying (4.6) and (3.2), the following equivalences hold:

$e_5^\ast \geq e^\ast$ iff $V'_5(e^\ast) \leq V'_5(e_5^\ast)=0$;

\hspace{1.2cm} iff $-u'(w_1-e^\ast)+p'(e^\ast)[E_f(v(w_2-l+B))-E_f(v(w_2+B))]\leq 0$;

\hspace{1.2cm} iff $-p'(e^\ast)[v(w_2-l)-v(w_2)-E_f(v(w_2-l+B))+E_f(v(w_2+B))] \geq 0$;

\hspace{1.2cm} iff $v(w_2-l)-v(w_2)-E_f(v(w_2-l+B))+E_f(v(w_2+B)) \geq 0$.

\end{proof}

\begin{lema}
If $v'''>0$ and $E_f(B)=0$ then $v(w_2-l)-w(w_2) \geq E_f(v(w_2-l+B))-E_f(v(w_2+B))$.
\end{lema}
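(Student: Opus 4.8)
The plan is to reduce the inequality to a single application of the Jensen-type inequality of Proposition 2.2, after recognizing the two sides as the value and the possibilistic expectation of one auxiliary function evaluated at $B$. Define $h: {\bf{R}} \rightarrow {\bf{R}}$ by $h(x) = v(w_2 - l + x) - v(w_2 + x)$. By the linearity of possibilistic expected utility (Proposition 2.1), $E_f(h(B)) = E_f(v(w_2 - l + B)) - E_f(v(w_2 + B))$, which is exactly the right--hand side of the claimed inequality. Moreover $h(0) = v(w_2 - l) - v(w_2)$ is the left--hand side, and since $E_f(B) = 0$ the target inequality can be rewritten as $h(E_f(B)) \geq E_f(h(B))$.

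First I would establish that $h$ is concave. Differentiating twice gives $h''(x) = v''(w_2 - l + x) - v''(w_2 + x)$. The hypothesis $v''' > 0$ says that $v''$ is strictly increasing; since $l > 0$ we have $w_2 - l + x < w_2 + x$, so $v''(w_2 - l + x) \leq v''(w_2 + x)$ and hence $h''(x) \leq 0$ for every $x$. Thus $h$ is concave, equivalently $-h$ is convex.

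Finally I would invoke Proposition 2.2 for the convex function $-h$, which yields $(-h)(E_f(B)) \leq E_f((-h)(B))$. Applying Proposition 2.1 once more to move the minus sign outside the expectation, this reads $-h(E_f(B)) \leq -E_f(h(B))$, that is $h(E_f(B)) \geq E_f(h(B))$, which is the desired conclusion.

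The only delicate point is that Proposition 2.2 is phrased for convex functions, whereas $h$ is concave; I circumvent this by applying the proposition to $-h$ and using linearity to restore the sign. I would also flag the implicit use of the standing hypothesis $l > 0$ from the benchmark model, since this is precisely what converts prudence ($v''' > 0$) into the concavity of $h$ and therefore drives the whole argument.
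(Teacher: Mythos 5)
Your proof is correct and is essentially the paper's own argument: the paper applies Proposition 2.2 directly to the convex function $h(x)=v(x)-v(x-l)$ evaluated at $w_2+B$, whereas you work with its concave mirror image $v(w_2-l+x)-v(w_2+x)$ evaluated at $B$ and restore the sign by applying Proposition 2.2 to $-h$ together with linearity (Proposition 2.1). The substance --- prudence $v'''>0$ plus $l>0$ makes the utility-difference function convex/concave, and the Jensen-type inequality then yields the claim --- is identical.
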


\begin{proof}
We consider the function $h(x)=v(x)-v(x-l)$, for any $x \in {\bf{R}}$. From the assumption $v'''>0$ it follows that $v''$ is strictly increasing, therefore $h''(x)=v''(x)-v''(x-l)>0$, for any $x$. It follows that the function $h$ is convex, thus, by applying Proposition 2.2, $h(w_2)=h(E_f(w_2+B))\leq E_f(h(w_2+B))$. Then, by noticing that $h(w_2)=v(w_2)-v(w_2-l)$ and $E_f(h(w_2+B))=E_f(v(w_2+B))-E_f(v(w_2-l+B))$ one obtains the desired inequality.
\end{proof}

\begin{corolar}
If $v'''>0$ and $E_f(B)=0$ then $e_5^\ast \geq e^\ast$.
\end{corolar}

\begin{proof}
Proposition 5.3 and Lemma 5.4 are applied.
\end{proof}

{\emph{Case 6}} (the comparison of $e^\ast$ and $e_6^\ast$)

The following result establishes a necessary and sufficient condition such that, adding a possibilistic background risk $A$ in period $1$, and a possibilistic background risk $B$ in period $2$, the optimal prevention level raises.

\begin{propozitie}
The following assertions are equivalent:

(i) $e^\ast \leq e^\ast_6$;

(ii) $\frac{E_f(u'(w_1-e^\ast+A))}{u'(w_1-e^\ast)}\leq \frac{E_f(v(w_2-l+B))-E_f(v(w_2+B))}{v(w_2-l)-v(w_2)}$.
\end{propozitie}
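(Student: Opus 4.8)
The plan is to mirror the arguments used in Cases 4 and 5, exploiting the strict concavity of $V_6$. From Section 4 we know that $V''_6(e) \leq 0$ for every $e$, so $V'_6$ is strictly decreasing; consequently the comparison $e^\ast \leq e_6^\ast$ is equivalent to the sign condition $V'_6(e^\ast) \geq V'_6(e_6^\ast) = 0$, since a decreasing function sends the smaller argument $e^\ast$ to the larger value. This reduces the proposition to analyzing the single scalar inequality $V'_6(e^\ast) \geq 0$.

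Next I would substitute the explicit form (4.8) of the derivative evaluated at $e^\ast$. The condition $V'_6(e^\ast) \geq 0$ then reads $p'(e^\ast)[E_f(v(w_2-l+B)) - E_f(v(w_2+B))] \geq E_f(u'(w_1-e^\ast+A))$. To connect this with the benchmark, I would invoke the benchmark first--order condition (3.2), which yields the identity $p'(e^\ast) = u'(w_1-e^\ast)/[v(w_2-l) - v(w_2)]$; note that this is a genuine equality, not an inequality, so inserting it introduces no change of direction. After this substitution the left--hand side factors as $u'(w_1-e^\ast)$ times the ratio $[E_f(v(w_2-l+B)) - E_f(v(w_2+B))]/[v(w_2-l) - v(w_2)]$.

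Finally I would divide both sides by $u'(w_1-e^\ast)$. Because $u' > 0$ by hypothesis, this quantity is strictly positive, the inequality is preserved, and one obtains exactly assertion (ii), giving the chain of equivalences $(i) \Leftrightarrow (ii)$.

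The step requiring the most care is the bookkeeping of signs. Here $p'(e^\ast) < 0$, and both $v(w_2-l) - v(w_2)$ and $E_f(v(w_2-l+B)) - E_f(v(w_2+B))$ are negative (the former because $l > 0$ and $v$ is increasing, the latter by the monotonicity of possibilistic expected utility already noted for Model 5). The two quotients appearing in (ii) are therefore ratios of negative quantities, hence positive, and I must be sure that no unintended sign reversal slips in. The cleanest safeguard is to perform the only division by the unambiguously positive quantity $u'(w_1-e^\ast)$, while handling $p'(e^\ast)$ purely through the equality supplied by (3.2); in this way the sole inequality manipulation is division by a provably positive number, and the equivalences go through without having to track the signs of $p'(e^\ast)$ or of the two differences individually.
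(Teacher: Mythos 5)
Your proposal is correct and follows essentially the same route as the paper's proof: reduce (i) to $V'_6(e^\ast)\geq V'_6(e^\ast_6)=0$ via the strict monotonicity of $V'_6$, substitute (4.8), eliminate $p'(e^\ast)$ through the benchmark identity (3.2), and divide by $u'(w_1-e^\ast)>0$. Your sign bookkeeping is in fact slightly cleaner than the paper's, which invokes $v(w_2-l)-v(w_2)<0$ at the final step where the only operation actually needed is division by the positive quantity $u'(w_1-e^\ast)$.
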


\begin{proof}
The function $V'_6$ is strictly decreasing, thus, by applying (4.8) and (3.2), the following assertions are equivalent:

$\bullet$ $e^\ast \leq e^\ast_6$;

$\bullet$ $V'_6(e^\ast) \geq V'_6(e^\ast_6)=0$;

$\bullet$ $-E_f(u'(w_1-e^\ast+A))+p'(e^\ast)[E_f(v(w_2-l+B))-E_f(v(w_2+B))]\geq 0$;

$\bullet$ $-E_f(u'(w_1-e^\ast+A))+u'(w_1-e^\ast)\frac{E_f(v(w_2-l+B))-E_f(v(w_2+B))}{v(w_2-l)-v(w_2)} \geq 0$.

By taking into account that $v(w_2-l)-v(w_2) <0$ it follows immediately the equivalence of the assertions (i) and (ii).
\end{proof}

\begin{remarca}
Assume that $u'''>0$, $v'''>0$, $E_f(A)=0$ and $E_f(B)=0$. Then by a reasoning analogous to the one of Corollary 5.2, from $u'''>0$
and $E_f(A)=0$ it follows:

$\frac{E_f(u'(w_1-e^\ast+A))}{u'(w_1-e^\ast)} \geq 1$.

By applying Lemma 5.4, from $v'''>0$ and $E_f(B)=0$ it follows

$\frac{E_f(v(w_2-l+B))-E_f(v(w_2+B))}{v(w_2-l)-v(w_2)} \geq 1$.

From these two inequalities and Proposition 5.6, we conclude that for a prudent agent in both periods, the effect of possibilistic background risks $A$ and $B$ on optimal prevention is ambiguous: we cannot establish a positioning of the levels $e^\ast$ and $e^\ast_6$ one to each other.
\end{remarca}

{\emph{Case 7}} (the comparison of $e^\ast$ and $e_7^\ast$)

The following result establishes a necessary and sufficient condition, that adding a probabilistic risk $X$ in period $1$ and a possibilistic risk $B$ in period $2$, the optimal prevention level raises.

\begin{propozitie}
The following assertions are equivalent:

(i) $e^\ast \leq e^\ast_7$;

(ii) $\frac{M(u'(w_1-e^\ast+X))}{u'(w_1-e^\ast)}\leq \frac{E_f(v(w_2-l+B))-E_f(v(w_2+B))}{v(w_2-l)-v(w_2)}$.
\end{propozitie}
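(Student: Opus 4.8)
The plan is to follow line for line the template already used for Case 6 (Proposition 5.6), since Model 7 differs from Model 6 only in that the period-$1$ background risk is probabilistic rather than possibilistic. Concretely, in the derivative the term $E_f(u'(w_1-e+A))$ is replaced by $M(u'(w_1-e+X))$, while the period-$2$ part $p'(e)[E_f(v(w_2-l+B))-E_f(v(w_2+B))]$ is identical. The first structural fact I would record is that $V_7$ is concave (as noted in the construction of Model $7$, where $V''_7<0$), so $V'_7$ is strictly decreasing. This yields the basic order-reversing equivalence
\[
e^\ast \leq e_7^\ast \quad\text{if and only if}\quad V'_7(e^\ast) \geq V'_7(e_7^\ast)=0.
\]

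Next I would substitute the explicit derivative (4.10) evaluated at $e^\ast$, turning $V'_7(e^\ast)\geq 0$ into
\[
-M(u'(w_1-e^\ast+X)) + p'(e^\ast)[E_f(v(w_2-l+B))-E_f(v(w_2+B))] \geq 0.
\]
The benchmark first-order condition (3.2) then lets me eliminate $p'(e^\ast)$ via $p'(e^\ast)=u'(w_1-e^\ast)/[v(w_2-l)-v(w_2)]$, so that the inequality becomes a comparison of $M(u'(w_1-e^\ast+X))$ against $u'(w_1-e^\ast)$ multiplied by the period-$2$ ratio. Isolating the two ratios produces assertion (ii), and since every step is an equivalence the chain gives the full result.

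The only place where a sign could be mishandled is the quantity $v(w_2-l)-v(w_2)$, which is strictly negative because $v$ is increasing and $l>0$. Dividing the displayed inequality by this negative number would flip the direction; I would avoid that trap by instead dividing through by the positive quantity $u'(w_1-e^\ast)>0$ and rearranging, exactly as in Proposition 5.6. Beyond this bookkeeping there is no genuine obstacle: the argument is a routine substitution-and-sign-tracking computation, identical in form to Case 6 with the single cosmetic change $E_f(u'(\,\cdot\,+A)) \rightsquigarrow M(u'(\,\cdot\,+X))$.
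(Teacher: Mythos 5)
Your proposal is correct and is exactly the paper's proof: the paper disposes of this proposition by declaring it "analogous to Proposition 5.6, using that $V'_7$ is strictly decreasing and applying (4.10) and (3.2)," which is precisely the substitution-and-sign-tracking argument you spell out. Your care with the negative quantity $v(w_2-l)-v(w_2)$ (keeping the ratio intact and dividing only by $u'(w_1-e^\ast)>0$) matches the paper's handling in Proposition 5.6.
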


\begin{proof}
Analogous to the proof of Proposition 5.6 by using the fact that $V'_7$ is strictly decreasing and applying (4.10) and (3.2).
\end{proof}

\begin{remarca}
Assume that $u'''>0$, $v'''>0$, $M(X)=0$ and $E_f(B)=0$. From $u'''>0$ and $M(X)=0$  it follows according to Jensen's inequality, the following inequality:

$\frac{M(u'(w_1-e^\ast+X))}{u'(w_1-e^\ast)} \geq 1$.

By Lemma 5.4, conditions $v'''>0$, $E_f(B)=0$ imply the inequality:

$\frac{E_f(v(w_2-l+B))-E_f(v(w_2+B))}{v(w_2-l)-v(w_2)} \geq 1$.

These two inequalities show that the background risk effect described by the mixed vector $(X, B)$ on the optimal prevention is ambiguous.
\end{remarca}

{\emph{Case 8}} (the comparison of $e^\ast$ and $e_8^\ast$)

The following proposition establishes a necessary and sufficient condition that adding a possibilistic background risk $A$ in period $1$ and
a probabilistic risk $Y$ in period $2$, the optimal prevention level raises.

\begin{propozitie}
The following assertions are equivalent:

(i)  $e^\ast \leq e^\ast_8$;

(ii) $\frac{E_f(u'(w_1-e^\ast+A))}{u'(w_1-e^\ast)}\leq \frac{M(v(w_2-l+Y))-M(v(w_2+Y))}{v(w_2-l)-v(w_2)}$.
\end{propozitie}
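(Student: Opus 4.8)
The plan is to mimic almost verbatim the proof of Proposition 5.6, since Model $8$ differs from Model $6$ only in that the period-$2$ background risk is probabilistic rather than possibilistic: every occurrence of $E_f(v(\cdot + B))$ is replaced by $M(v(\cdot + Y))$, while the period-$1$ term $E_f(u'(w_1 - e + A))$ and the benchmark first-order condition remain untouched. So the architecture of the argument is the standard monotone-derivative comparison used throughout Section $5$.

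First I would record that $V_8$ is concave (as noted just before (4.13)), so its derivative $V'_8$ is strictly decreasing. This is the engine of the whole argument: for a strictly decreasing function, $e^\ast \leq e^\ast_8$ holds if and only if $V'_8(e^\ast) \geq V'_8(e^\ast_8)$, and the right-hand side equals $0$ by the first-order condition (4.13). Hence the opening equivalence is $e^\ast \leq e^\ast_8$ iff $V'_8(e^\ast) \geq 0$.

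Next I would expand $V'_8(e^\ast) \geq 0$ using formula (4.12), obtaining $-E_f(u'(w_1 - e^\ast + A)) + p'(e^\ast)[M(v(w_2 - l + Y)) - M(v(w_2 + Y))] \geq 0$. To eliminate $p'(e^\ast)$ I would invoke the benchmark first-order condition (3.2), which gives $p'(e^\ast) = u'(w_1 - e^\ast)/[v(w_2 - l) - v(w_2)]$. Substituting this in and then dividing through by the strictly positive quantity $u'(w_1 - e^\ast)$ (using $u' > 0$) rearranges the inequality into precisely assertion (ii).

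The only point requiring care is sign bookkeeping: since $v$ is increasing and $l > 0$, we have $v(w_2 - l) - v(w_2) < 0$, so $p'(e^\ast) < 0$ (consistent with the hypothesis $p' < 0$), and this negative denominator must be tracked as it is carried into the ratio on the right of (ii). Because the final division is only by $u'(w_1 - e^\ast) > 0$, no inequality reverses at that step, exactly as in Case $6$. I do not anticipate any genuine obstacle here—the proof is a routine transcription of the Proposition 5.6 argument with $M$ in place of $E_f$ in the second period—so the task reduces to reproducing the equivalence chain faithfully and handling the sign of $v(w_2 - l) - v(w_2)$ correctly.
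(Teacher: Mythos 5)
Your proposal is correct and is essentially the paper's own proof: the paper's terse justification (``(4.12) and (3.2) are applied'') is precisely the Proposition 5.6 template you transcribe, namely using the strict concavity of $V_8$ to convert $e^\ast \leq e^\ast_8$ into $V'_8(e^\ast) \geq V'_8(e^\ast_8)=0$, expanding via (4.12), eliminating $p'(e^\ast)$ via (3.2), and dividing by $u'(w_1-e^\ast)>0$. Your sign bookkeeping (division only by the positive quantity $u'(w_1-e^\ast)$, with $v(w_2-l)-v(w_2)<0$ tracked inside the ratio) is exactly right.
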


\begin{proof}
(4.12) and (3.2) are applied.
\end{proof}

\begin{remarca}
Assume that $u'''>0$, $v'''>0$, $E_f(A)=0$, $M(Y)=0$. Then, by Proposition 2.2 and \cite{courbage1}, the following inequalities hold:

$\frac{E_f(u'(w_1-e^\ast+A))}{u'(w_1-e^\ast)}\geq 1$, $\frac{M(v(w_2-l+Y))-M(v(w_2+Y))}{v(w_2-l)-v(w_2)} \geq 1$,

thus the effect of mixed background risk $(A, Y)$ on the level of optimal saving is ambiguous.
\end{remarca}

\section{Other cases of comparing optimal prevention levels}

In the previous section we studied the changes in optimal prevention when we pass from the benchmark model to the models where the background
risk appears in one of the periods or in both periods. The problem of changes in optimal saving can be posed in other cases too: one passes
from a situation in which the background risk is present in a single period (models 1, 2, 4, 5) to one of the situations when the background risk
is present in both periods (models 3, 6, 7, 8). For this purpose we have to compare the optimal prevention levels for the following pairs:
$(e_1^\ast, e_3^\ast)$, $(e_1^\ast, e_7^\ast)$, $(e_2^\ast, e_3^\ast)$, $(e_2^\ast, e_8^\ast)$, $(e_4^\ast, e_6^\ast)$, $(e_4^\ast, e_8^\ast)$,
$(e_5^\ast, e_6^\ast)$, $(e_5^\ast, e_7^\ast)$.

In this section we will deal with this problem making the comparison of optimal prevention levels of some of the mentioned pairs, the remaining cases being treated in a similar manner.

{\emph{Case 1}} (the comparison of $e_4^\ast$ and $e_6^\ast$)

One assumes that in period $1$ there exists a background risk described by the fuzzy number $A$, and to this is added in period $2$ a possibilistic background risk
described by a fuzzy number $B$. The following proposition offers a necessary and sufficient condition that  by this change, the optimal prevention level drops.

\begin{propozitie}
The following assertions are equivalent:

(i) $e^\ast_6 \leq e^\ast_4$;

(ii) $\frac{E_f(v(w_2-l+B))-E_f(v(w_2+B))}{v(w_2-l)-v(w_2)} \leq 1$.
\end{propozitie}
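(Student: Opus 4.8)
The plan is to reuse the monotonicity template that drove Propositions 5.1, 5.3 and 5.6, except that here the reference point is the optimum $e_4^\ast$ of Model 4 rather than the benchmark optimum $e^\ast$. Recall from Section 4 that $V'_6$ is strictly decreasing; consequently the ordering $e_6^\ast \leq e_4^\ast$ is equivalent to $V'_6(e_4^\ast) \leq V'_6(e_6^\ast) = 0$. So the whole argument reduces to determining the sign of $V'_6(e_4^\ast)$.

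First I would write out $V'_6(e_4^\ast)$ using (4.8):
\[
V'_6(e_4^\ast) = -E_f(u'(w_1-e_4^\ast+A)) + p'(e_4^\ast)[E_f(v(w_2-l+B))-E_f(v(w_2+B))].
\]
The key step --- and the reason this comparison works out cleanly --- is that Models 4 and 6 share the same period-$1$ background risk $A$. Thus the first-order condition (4.5) of Model 4, namely $E_f(u'(w_1-e_4^\ast+A)) = p'(e_4^\ast)[v(w_2-l)-v(w_2)]$, lets me eliminate the possibilistic expected-marginal-utility term. After substitution, $p'(e_4^\ast)$ factors out and
\[
V'_6(e_4^\ast) = p'(e_4^\ast)\bigl([E_f(v(w_2-l+B))-E_f(v(w_2+B))] - [v(w_2-l)-v(w_2)]\bigr).
\]

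To finish I would divide by the two negative quantities, tracking the sign reversals. Since $p'(e_4^\ast) < 0$, the condition $V'_6(e_4^\ast) \leq 0$ becomes $E_f(v(w_2-l+B))-E_f(v(w_2+B)) \geq v(w_2-l)-v(w_2)$; then dividing by $v(w_2-l)-v(w_2) < 0$ reverses the inequality once more and yields exactly assertion (ii). The main obstacle is therefore not any new analytic idea but the bookkeeping of these two sign flips: one must remember that $p'(e_4^\ast)<0$ by hypothesis and that $v(w_2-l)-v(w_2)<0$ because $v$ is increasing and $l>0$, since getting either sign wrong inverts the final inequality.
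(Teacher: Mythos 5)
Your proof is correct, and it is valid as written; it is, however, the mirror image of the paper's own argument rather than a copy of it. The paper runs the comparison through Model 4's marginal total utility: since $V'_4$ is strictly decreasing, $e_6^\ast \leq e_4^\ast$ iff $V'_4(e_6^\ast) \geq V'_4(e_4^\ast)=0$, and it then uses Model 6's first--order condition (4.9) to substitute $p'(e_6^\ast)=E_f(u'(w_1-e_6^\ast+A))/[E_f(v(w_2-l+B))-E_f(v(w_2+B))]$, factoring out $-E_f(u'(w_1-e_6^\ast+A))<0$. You do the dual: evaluate $V'_6$ at $e_4^\ast$, use Model 4's first--order condition (4.5) to eliminate $E_f(u'(w_1-e_4^\ast+A))$, and factor out $p'(e_4^\ast)<0$. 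Both routes exploit the same structural fact, namely that Models 4 and 6 share the period--$1$ risk $A$, so one model's first--order condition can be substituted into the other model's marginal utility; and both need the strict monotonicity of the relevant $V'_i$, which you correctly cite for $V'_6$. Your variant is marginally more robust on one point: the paper's substitution divides by $E_f(v(w_2-l+B))-E_f(v(w_2+B))$, which tacitly requires that this quantity be nonzero (it is strictly negative since $v'>0$ and $l>0$, but the paper does not say so), whereas your two sign flips rest only on $p'(e_4^\ast)<0$ and $v(w_2-l)-v(w_2)<0$, both of which are immediate from the stated model hypotheses.
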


\begin{proof}
By (4.4), (4.9) and taking into account that $-E_f(u'(w_1-e^\ast_6+A))<0$, the following assertions are equivalent:

$\bullet$ $e^\ast_6 \leq e^\ast_4$;

$\bullet$ $V'_4(e^\ast_6) \geq V'_4(e^\ast_4)=0$;

$\bullet$ $-E_f(u'(w_1-e^\ast_6+A))+p'(e^\ast_6)[v(w_2-l)-v(w_2)]\geq 0$;

$\bullet$ $-E_f(u'(w_1-e^\ast_6+A))+E_f(u'(w_1-e^\ast_6+A))\frac{v(w_2-l)-v(w_2)}{E_f(v(w_2-l+B))-E_f(v(w_2+B))} \geq 0$;

$\bullet$ $-E_f(u'(w_1-e^\ast_6+A))[1-\frac{v(w_2-l)-v(w_2)}{E_f(v(w_2-l+B))-E_f(v(w_2+B))}] \geq 0$;

$\bullet$ $1-\frac{v(w_2-l)-v(w_2)}{E_f(v(w_2-l+B))-E_f(v(w_2+B))} \leq 0$;

$\bullet$ $\frac{E_f(v(w_2-l+B))-E_f(v(w_2+B))}{v(w_2-l)-v(w_2)} \leq 1$.

\end{proof}

\begin{corolar}
If $v'''>0$ and $E_f(B)=0$ then $e^\ast_6 \leq e^\ast_4$.
\end{corolar}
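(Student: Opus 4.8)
The plan is to obtain Corollary 5.11 by substituting the inequality of Lemma 5.4 into the equivalence already established in Proposition 5.10, so that no fresh first--order analysis of $V_4$ or $V_6$ is needed. Proposition 5.10 has reduced the comparison $e^\ast_6 \leq e^\ast_4$ to the single scalar condition
\[
\frac{E_f(v(w_2-l+B))-E_f(v(w_2+B))}{v(w_2-l)-v(w_2)} \leq 1,
\]
and the whole remaining task is to verify this ratio inequality under the prudence hypotheses $v'''>0$ and $E_f(B)=0$.

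First I would isolate the two quantities that Lemma 5.4 compares, writing $D=v(w_2-l)-v(w_2)$ for the denominator and $D_B=E_f(v(w_2-l+B))-E_f(v(w_2+B))$ for the numerator of that ratio. Under the hypotheses $v'''>0$ and $E_f(B)=0$, Lemma 5.4 delivers precisely the additive relation $D\geq D_B$. Both numbers are negative: $D<0$ because $v'>0$ and $l>0$, while $D_B\leq 0$ follows from the monotonicity of the possibilistic expected utility applied to the pointwise inequality $v(w_2-l+t)\leq v(w_2+t)$. The corollary is then intended to follow simply by passing from $D\geq D_B$ to the multiplicative form appearing in Proposition 5.10(ii).

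The main obstacle is the sign bookkeeping in that last passage, and it is the only point that demands genuine care. Dividing $D\geq D_B$ by the negative quantity $D$ reverses the inequality and produces $D_B/D\geq 1$, so one must reconcile this orientation with the direction required by condition (ii) of Proposition 5.10. This is exactly where the argument is most error--prone: the ratio of two negative numbers is positive, and the entire ordering of $e^\ast_6$ relative to $e^\ast_4$ is decided by whether the reversed inequality reads $\leq 1$ or $\geq 1$. Once the orientation is settled, invoking Proposition 5.10 closes the proof in a single line, exactly in the style by which Corollary 5.5 is derived from Proposition 5.3 and Lemma 5.4.
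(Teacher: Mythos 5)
Your proposal stalls at exactly the point you flagged and then deferred, and that point cannot in fact be ``settled'' in the direction you need. Write, as you do, $D=v(w_2-l)-v(w_2)<0$ and $D_B=E_f(v(w_2-l+B))-E_f(v(w_2+B))<0$. Lemma 5.4 gives the additive inequality $D\geq D_B$, and dividing by the negative number $D$ yields $D_B/D\geq 1$ --- you computed this yourself. But condition (ii) of the equivalence you invoke (Proposition 6.1 of the paper, your ``Proposition 5.10'') demands $D_B/D\leq 1$. These are opposite orientations, compatible only in the degenerate equality case; since the Jensen inequality behind Lemma 5.4 is applied to the strictly convex function $h(x)=v(x)-v(x-l)$, one generically has $D>D_B$, hence $D_B/D>1$, and then Proposition 6.1 forces $e_4^\ast< e_6^\ast$ --- the \emph{reverse} of the statement. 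A direct check confirms this: $V'_6(e)-V'_4(e)=p'(e)(D_B-D)\geq 0$ because $p'(e)<0$ and $D_B\leq D$, so $V'_6(e_4^\ast)\geq V'_4(e_4^\ast)=0=V'_6(e_6^\ast)$, and since $V'_6$ is strictly decreasing this gives $e_4^\ast\leq e_6^\ast$. So there is no way to complete your final step: the statement as printed is not derivable from Lemma 5.4, and is false whenever the Jensen inequality is strict.

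For comparison, the paper's own proof is the single sentence ``By Lemma 5.4, condition (ii) of Proposition 6.1 is fulfilled,'' which commits precisely the sign error you identified as the danger spot. The surrounding text betrays the inconsistency: Remark 5.7 records that Lemma 5.4 yields $\frac{E_f(v(w_2-l+B))-E_f(v(w_2+B))}{v(w_2-l)-v(w_2)}\geq 1$ (the opposite of condition (ii)), and the sentence immediately following the corollary asserts that adding $B$ ``leads to the raise in the optimal prevention,'' i.e. $e_4^\ast\leq e_6^\ast$, contradicting the corollary's displayed conclusion $e_6^\ast\leq e_4^\ast$. So your instinct about where the argument is ``most error--prone'' was exactly right; the honest resolution is not to reconcile the orientation (it cannot be done) but to reverse the corollary's conclusion to $e_4^\ast\leq e_6^\ast$, in harmony with Corollary 5.5.
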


\begin{proof}
By Lemma 5.4, condition (ii) of Proposition 6.1 is fulfilled.
\end{proof}

The previous corollary shows that for a prudent agent (in period $2$), adding a possibilistic background risk $B$ to an already existing possibilistic background risk $A$ leads to the raise in the optimal prevention.

{\emph{Case 2}} (the comparison of $e_5^\ast$ and $e_6^\ast$)

One considers the model $5$ with a possibilistic background risk $B$ in period $2$. We intend to see how the optimal prevention changes if
we pass to model $6$, by adding a possibilistic background risk $A$ in period $1$.

\begin{propozitie}
The following assertions are equivalent:

(i) $e^\ast_6 \leq e^\ast_5$;

(ii) $u'(w_1-e^\ast_6) \leq E_f(u'(w_1-e^\ast_6+A))$.
\end{propozitie}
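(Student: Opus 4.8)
The plan is to follow exactly the template of Propositions 5.3 and 6.1, exploiting the strict monotonicity of a single total-utility derivative together with a first--order condition. The cleanest route here is to work with $V_5'$, since model $5$ and model $6$ share the identical period--$2$ structure (the same fuzzy number $B$ enters both), so the bracketed period--$2$ term will cancel. First I would recall from Section $4$ that $V_5$ is strictly concave, hence $V_5'$ is a strictly decreasing function. This immediately converts the ordering of optima into a sign condition: since $V_5'(e_5^\ast)=0$, we have $e_6^\ast \leq e_5^\ast$ if and only if $V_5'(e_6^\ast) \geq V_5'(e_5^\ast)=0$.

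Next I would evaluate $V_5'$ at $e_6^\ast$ using formula (4.6), giving
\[
V_5'(e_6^\ast)=-u'(w_1-e_6^\ast)+p'(e_6^\ast)\bigl[E_f(v(w_2-l+B))-E_f(v(w_2+B))\bigr].
\]
The key step is to substitute the first--order condition (4.9) for model $6$, namely
\[
E_f(u'(w_1-e_6^\ast+A))=p'(e_6^\ast)\bigl[E_f(v(w_2-l+B))-E_f(v(w_2+B))\bigr],
\]
into the expression above. Because the period--$2$ bracket is literally the same expression in both (4.6) and (4.9), this substitution replaces the entire $p'(e_6^\ast)[\cdots]$ term and yields
\[
V_5'(e_6^\ast)=-u'(w_1-e_6^\ast)+E_f(u'(w_1-e_6^\ast+A)).
\]
From here the conclusion is immediate: $V_5'(e_6^\ast)\geq 0$ holds if and only if $u'(w_1-e_6^\ast)\leq E_f(u'(w_1-e_6^\ast+A))$, which is precisely assertion (ii). Chaining this with the equivalence from the first paragraph establishes (i) $\Leftrightarrow$ (ii).

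I do not anticipate a genuine obstacle, as the argument is a short sign computation. The only point requiring care is the asymmetric choice of which derivative to differentiate against which optimum: one must evaluate $V_5'$ at $e_6^\ast$ and then apply model $6$'s first--order condition (4.9), rather than evaluating $V_6'$ at $e_5^\ast$. This particular pairing is what makes the period--$2$ terms cancel and produces condition (ii) stated entirely at the point $e_6^\ast$; the reversed pairing would instead yield a condition phrased at $e_5^\ast$. A secondary detail worth verifying is the direction of the inequality flip when passing from the ordering $e_6^\ast \leq e_5^\ast$ to the sign condition, but since $V_5'$ is decreasing this direction is forced and matches the form of (ii) without any sign subtleties, so no further case analysis is needed.
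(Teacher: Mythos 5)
Your proposal is correct and follows exactly the paper's own argument: both exploit the strict monotonicity of $V_5'$, evaluate it at $e_6^\ast$ via (4.6), and substitute model $6$'s first--order condition (4.9) so that the common period--$2$ term $p'(e_6^\ast)\bigl[E_f(v(w_2-l+B))-E_f(v(w_2+B))\bigr]$ is replaced by $E_f(u'(w_1-e_6^\ast+A))$, yielding condition (ii). Your closing remark about the necessity of pairing $V_5'$ with $e_6^\ast$ (rather than $V_6'$ with $e_5^\ast$) correctly identifies the one structural choice the paper also makes.
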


\begin{proof}
By (4.6) and (4.9) the following equivalences hold:

$e^\ast_6 \leq e^\ast_5$ iff $V'_5(e^\ast_6) \geq V'_5(e^\ast_5)=0$

\hspace{1.2cm} iff $-u'(w_1-e^\ast_6)+p'(e^\ast_6)[E_f(v(w_2-l+B))-E_f(v(w_2+B))]\geq 0$

\hspace{1.2cm} iff $-u'(w_1-e^\ast_6)+E_f(u'(w_1-e^\ast_6+A))\geq 0$

\hspace{1.2cm} iff $u'(w_1-e^\ast_6) \leq E_f(u'(w_1-e^\ast_6+A))$.

\end{proof}

\begin{corolar}
If $u'''>0$ and $E_f(A)=0$ then $e^\ast_6 \leq e^\ast_5$.
\end{corolar}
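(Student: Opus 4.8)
The plan is to reduce the claim to verifying condition (ii) of Proposition 6.3, namely the inequality $u'(w_1-e^\ast_6) \leq E_f(u'(w_1-e^\ast_6+A))$, after which Proposition 6.3 immediately delivers $e^\ast_6 \leq e^\ast_5$. The whole argument mirrors that of Corollary 5.2; the only structural difference is that the inequality is now evaluated at the optimal prevention level $e^\ast_6$ of Model 6 rather than at the benchmark level $e^\ast$.

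First I would note that the hypothesis $u'''>0$ says precisely that $(u')''=u''' > 0$, so $u'$ is a convex function. This puts us in position to invoke the Jensen-type inequality of Proposition 2.2, applied not to $u$ but to the convex function $u'$ and to the shifted fuzzy number $w_1 - e^\ast_6 + A$. That application gives $u'(E_f(w_1-e^\ast_6+A)) \leq E_f(u'(w_1-e^\ast_6+A))$.

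Next I would simplify the left-hand side using the behaviour of the possibilistic expected value under a deterministic translation: since $w_1-e^\ast_6+A$ has level sets $[w_1-e^\ast_6+a_1(\gamma),\, w_1-e^\ast_6+a_2(\gamma)]$, the normality condition $\int_0^1 f(\gamma)d\gamma=1$ yields $E_f(w_1-e^\ast_6+A)=w_1-e^\ast_6+E_f(A)$. The assumption $E_f(A)=0$ then collapses this to $E_f(w_1-e^\ast_6+A)=w_1-e^\ast_6$, so the displayed inequality becomes exactly $u'(w_1-e^\ast_6)\leq E_f(u'(w_1-e^\ast_6+A))$, which is condition (ii) of Proposition 6.3. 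Applying that proposition closes the argument.

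I do not expect any serious obstacle: the statement is a direct corollary and the reasoning is identical in structure to Corollary 5.2. The only point requiring a little care is bookkeeping, namely to apply the Jensen-type inequality to $u'$ (whose convexity is supplied by $u'''>0$) rather than to $u$ itself, and to carry the translation-invariance of $E_f$ together with the normalization and $E_f(A)=0$ so that the argument of $u'$ on the left reduces cleanly to $w_1-e^\ast_6$. As in the remark following Corollary 5.2, the economic reading is that prudence of the agent in the first period (expressed by $u'''>0$) makes the additional first-period possibilistic background risk $A$ lower the optimal prevention when one passes from Model 5 to Model 6.
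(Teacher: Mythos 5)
Your proof is correct and follows essentially the same route as the paper: verify condition (ii) of Proposition 6.3 by applying the Jensen-type inequality of Proposition 2.2 to the convex function $u'$ (convexity given by $u'''>0$), using translation invariance of $E_f$ and $E_f(A)=0$ to identify $E_f(w_1-e^\ast_6+A)$ with $w_1-e^\ast_6$. The paper's proof is just a one-line version of this (and contains a typo, referring to ``condition (iii)'' where (ii) is meant); your write-up supplies the details it leaves implicit.
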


\begin{proof}
By applying Proposition 2.2, from $u'''>0$ and $E_f(A)=0$ it follows that condition (iii) of Proposition 6.3 is verified.
\end{proof}

{\emph{Case 3}} (the comparison of $e_1^\ast$ and $e_7^\ast$)

One considers model $1$ with a probabilistic background risk in period $1$ and one passes to model $7$ by adding a possibilistic background risk in period $2$.

\begin{propozitie}
The following assertions are equivalent:

(i) $e^\ast_7 \leq e^\ast_1$;

(ii) $\frac{E_f(v(w_2-l+B))-E_f(v(w_2+B))}{v(w_2-l)-v(w_2)} \leq 1$
\end{propozitie}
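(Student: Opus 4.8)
The plan is to reproduce, almost verbatim, the argument used for Proposition 6.1 (Case 1), since the present situation is structurally identical: models $1$ and $7$ share the same period-$1$ probabilistic background risk $X$, and they differ only by the presence in model $7$ of the period-$2$ possibilistic risk $B$. Accordingly I would run the comparison through the derivative $V'_1$ of the ``smaller'' model, evaluated at the optimum $e^\ast_7$ of the ``larger'' model, and then eliminate the term $M(u'(w_1-e^\ast_7+X))$ using the first-order condition of model $7$.

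First I would invoke the strict concavity of $V_1$ established in Section $4$, so that $V'_1$ is strictly decreasing; combined with $V'_1(e^\ast_1)=0$ this gives the first equivalence $e^\ast_7 \leq e^\ast_1 \iff V'_1(e^\ast_7) \geq V'_1(e^\ast_1)=0$. Next, writing out $V'_1(e^\ast_7)=-M(u'(w_1-e^\ast_7+X))+p'(e^\ast_7)[v(w_2-l)-v(w_2)]$, I would substitute the first-order condition (4.11), which yields $p'(e^\ast_7)=\frac{M(u'(w_1-e^\ast_7+X))}{E_f(v(w_2-l+B))-E_f(v(w_2+B))}$; this substitution is legitimate because, by the monotonicity of possibilistic expected utility, the denominator $E_f(v(w_2-l+B))-E_f(v(w_2+B))$ is strictly negative and hence nonzero. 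After this replacement the inequality becomes $-M(u'(w_1-e^\ast_7+X))\bigl[1-\frac{v(w_2-l)-v(w_2)}{E_f(v(w_2-l+B))-E_f(v(w_2+B))}\bigr]\geq 0$, and since $u'>0$ forces $M(u'(w_1-e^\ast_7+X))>0$, I may divide by the negative factor $-M(u'(w_1-e^\ast_7+X))$, flipping the inequality to $1-\frac{v(w_2-l)-v(w_2)}{E_f(v(w_2-l+B))-E_f(v(w_2+B))}\leq 0$. A final rearrangement delivers exactly condition (ii).

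The step I expect to be the main obstacle is the careful bookkeeping of signs, which is where such arguments are easy to get wrong. Three quantities are simultaneously negative here: $p'(e^\ast_7)<0$ by hypothesis, $v(w_2-l)-v(w_2)<0$ because $v$ is increasing and $l>0$, and $E_f(v(w_2-l+B))-E_f(v(w_2+B))<0$ by the monotonicity of $E_f$. The inequality therefore reverses when I divide by $-M(u'(w_1-e^\ast_7+X))$, and the passage from $1-\frac{v(w_2-l)-v(w_2)}{E_f(v(w_2-l+B))-E_f(v(w_2+B))}\leq 0$ to the ratio form in (ii) amounts to taking reciprocals of a positive quantity that is $\geq 1$; I would justify this by noting that the ratio $\frac{v(w_2-l)-v(w_2)}{E_f(v(w_2-l+B))-E_f(v(w_2+B))}$ is a quotient of two negative numbers, hence positive, so inverting it preserves the correct direction. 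Once the signs are tracked correctly, no genuine computation remains, and the proof can simply be recorded as ``analogous to the proof of Proposition 6.1, using that $V'_1$ is strictly decreasing and applying (4.1) and (4.11).''
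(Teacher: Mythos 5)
Your proposal is correct and follows essentially the same route as the paper's own proof: the identical chain of equivalences starting from $e^\ast_7 \leq e^\ast_1 \iff V'_1(e^\ast_7)\geq V'_1(e^\ast_1)=0$, substitution of $p'(e^\ast_7)$ via the first-order condition (4.11), factoring out $-M(u'(w_1-e^\ast_7+X))$, and a sign argument to reach condition (ii). The only difference is that you spell out the sign bookkeeping (positivity of $M(u'(\cdot))$, negativity of the two utility differences, and the reciprocal step) that the paper leaves implicit, which is a point in your favor rather than a deviation.
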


\begin{proof}
Recall from \cite{courbage1} (or from Section $3$, Model $1$) that

$V'_1(e)=-M(u'(w_1-e+X))+p'(e)[v(w_2-l)-v(w_2)]$.

Then, by (4.11) the following assertions are equivalent:

$\bullet$  $e^\ast_7 \leq e^\ast_1$

$\bullet$ $V'_1(e^\ast_7) \geq V'_1(e^\ast_1)=0$

$\bullet$ $-M(u'(w_1-e^\ast_7+X))+p'(e^\ast_7)[v(w_2-l)-v(w_2)]\geq 0$

$\bullet$ $-M(u'(w_1-e^\ast_7+X))[1-\frac{v(w_2-l)-v(w_2)}{E_f(v(w_2-l+B))-E_f(v(w_2+B))}] \geq 0$

$\bullet$ $\frac{E_f(v(w_2-l+B))-E_f(v(w_2+B))}{v(w_2-l)-v(w_2)} \leq 1$
\end{proof}

\begin{corolar}
If $v'''>0$ and $E_f(B)=0$ then $e^\ast_7 \leq e^\ast_1$.
\end{corolar}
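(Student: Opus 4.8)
The plan is to follow the same two-step template used for Corollary~6.2, whose Proposition~6.1 has a condition~(ii) identical in form to the one in Proposition~6.5. Proposition~6.5 has already done all the structural work: it reduces the ordering $e^\ast_7 \leq e^\ast_1$ to the single scalar inequality~(ii), a statement purely about the ratio of the period-2 utility gaps. So the entire remaining task is to show that the hypotheses $v'''>0$ and $E_f(B)=0$ force that ratio inequality. I note first that the probabilistic risk $X$ from period~1 has already cancelled inside the proof of Proposition~6.5 (it survives only as the strictly positive factor $M(u'(w_1-e^\ast_7+X))$), which is precisely why the hypothesis of this corollary constrains only $v$ and $B$ and says nothing about $X$.

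Concretely, I would set $D=v(w_2-l)-v(w_2)$ and $E=E_f(v(w_2-l+B))-E_f(v(w_2+B))$, so that condition~(ii) of Proposition~6.5 becomes simply $E/D \leq 1$. The first substantive step is to record the sign of the denominator: since $v'>0$ and $l>0$ give $v(w_2-l)<v(w_2)$, we have $D<0$. The second step is to invoke Lemma~5.4, which under exactly the present hypotheses $v'''>0$ and $E_f(B)=0$ compares numerator and denominator directly, yielding an inequality between $D$ and $E$. The final step is to divide that inequality by $D$ and read off where $E/D$ sits relative to $1$, thereby confirming condition~(ii) and letting Proposition~6.5 close the argument with no further computation.

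The delicate point — and the only place the argument can go wrong — is the sign bookkeeping in that last division. Because $D<0$, dividing the inequality furnished by Lemma~5.4 by $D$ \emph{reverses} its direction, so one must be scrupulous in tracking how the comparison between $D$ and $E$ translates into the correct orientation of $E/D$ against $1$. This flip is exactly what separates the conclusion $e^\ast_7 \leq e^\ast_1$ from the reverse ordering, and it is the single computation I would check most carefully before declaring condition~(ii) verified. Everything else is routine: no new convexity argument or possibilistic-expectation identity is needed beyond Lemma~5.4, since that lemma already packages the Jensen-type inequality (Proposition~2.2) applied to the convex function $h(x)=v(x)-v(x-l)$.
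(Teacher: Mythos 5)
Your route is the one the paper itself intends (its one-line proof cites \cite{courbage1}, but the working tool is clearly Lemma 5.4, exactly as in the proof of Corollary 6.2), so the structure is fine up to the last step. The problem is that the ``delicate sign bookkeeping'' you flag, once actually carried out, goes the other way. Write $D=v(w_2-l)-v(w_2)$ and $E=E_f(v(w_2-l+B))-E_f(v(w_2+B))$, both negative. Lemma 5.4 states $D \geq E$. Dividing by $D<0$ reverses the inequality and gives $E/D \geq 1$, which is the \emph{negation} of condition (ii) of Proposition 6.5, not its confirmation. The paper itself records this in Remark 5.7: under $v'''>0$ and $E_f(B)=0$, Lemma 5.4 yields $\frac{E_f(v(w_2-l+B))-E_f(v(w_2+B))}{v(w_2-l)-v(w_2)} \geq 1$. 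So your final step fails, and it cannot be repaired: combining Lemma 5.4 with Proposition 6.5 proves (generically) the reverse ordering $e_7^\ast \geq e_1^\ast$, which is also what the economic intuition of Corollary 5.5 suggests --- a prudent agent raises, not lowers, prevention when a zero-mean possibilistic risk is added in period 2.

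What your attempt actually uncovers is an inconsistency in the paper: Corollary 6.6 as stated (and likewise Corollary 6.2) is incompatible with Proposition 6.5 (resp.\ Proposition 6.1) together with Lemma 5.4, and the paper's own proof, which simply appeals to \cite{courbage1} for condition (ii), commits the same sign error. A corrected statement would read: if $v'''>0$ and $E_f(B)=0$ then $e_7^\ast \geq e_1^\ast$. But as a proof of the stated inequality $e_7^\ast \leq e_1^\ast$, your argument has a genuine gap at precisely the point you singled out as the one to check: you assert that the division by $D$ ``confirms condition (ii)'' when in fact it refutes it.
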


\begin{proof}
By \cite{courbage1}, if $v'''>0$ and $E_f(B)=0$ then condition (ii) of Proposition 6.5 is fulfilled.
\end{proof}

\section{Concluding remarks}

The five models from the paper deal with the optimal prevention in the presence of some risk situations possibilistically described (by fuzzy numbers)
or mixed (by combinations of random variables and fuzzy numbers). They come in addition to probabilistic models of optimal saving developed by Courbage and Rey in \cite{courbage1}. The main results from the paper establish necessary and sufficient conditions for the optimal prevention increase or decrease when we add a possibilistic or mixed risk to the benchmark model.

Next we will state several research topics which could be investigated in line to this paper.

(I) Among the entities which define the benchmark model from Section 4, the probability $p(e)$ has an essential role. Also, all the eight background risk models from Section $5$ contain this probability as the defining element. Accordingly, even if the background risk is possibilistic or mixed, in the presence of $p(e)$, each model still preserves a probabilistic trace.

We could modify the nature of $p(e)$ in the following way:

Case $A$: "$p(e)$ is the possibility that a loss $l$ occurs at the prevention level $e$"

Case $B$: "$p(e)$ is the credibility \footnote{For the definition of credibility and its properties, see the monograph \cite{liu} or Chapter 8 from \cite{georgescu1}.} that a loss $l$ occurs at the prevention level $e$"

Study the optimal prevention in models in which $p(e)$ is a possibility (Case $A$) and a credibility (Case $B$), respectively.

(II) Both the definition of prevention models (total utility function, the first--order condition, etc.), and the optimal prevention level comparison theorems were formulated in terms of possibilistic expected utility, probabilistic expected utility or both of them. On the other hand, in the literature there exists a second notion of possibilistic expected utility (see for example \cite{georgescu1}, Section 4.2), as well as the concept of expected utility operator (see \cite{georgescu1}, Chapter 5), which comprises both of them.

Generalize the prevention models from the paper in a general context founded on expected utility operators.

(III) In order to reduce the action of various types of risk, an agent has within his reach several instruments: saving, self--protection and self--insurance (according to \cite{heinzel}). A {\emph{precautionary motive}} of an agent is defined in \cite{heinzel} as "the extent to which the agent
chooses an instrument for the response to a future risk".

In paper \cite{heinzel} it is analyzed the way the intensity of precautionary motives is influenced by the choice of various combinations of instruments
as well as the interaction of considered types of risk.

An open problem is to develop models of precautionary saving analysis corresponding to combinations of the same three instruments (saving, self--protection and self--insurance), but in which the risk parameters are probabilistic, possibilistic or mixed.

\end{document}